\newtheorem{theorem}{Theorem}
\newtheorem{lemma}[theorem]{Lemma}
\newtheorem{corollary}[theorem]{Corollary}
\newtheorem{property}[theorem]{Property}
\newtheorem{claim}[theorem]{Claim}
\title{Coloring outerplanar graphs and planar 3-trees\\with small monochromatic components\footnote{This work started at the Graph and Network Visualization (GNV) workshop of IISA '16.}}
\author{Michael~A.~Bekos$^1$ \and Carla~Binucci$^2$ \and Michael~Kaufmann$^1$ \and Chrysanthi~Raftopoulou$^3$ \and Antonios~Symvonis \and Alessandra~Tappini$^2$
\\
\medskip
\\
$^1$Wilhelm-Schickhard-Institut f\"ur Informatik, Universit\"at T\"ubingen, Germany\\
\texttt{\{bekos,mk\}@informatik.uni-tuebingen.de}
\smallskip\\
$^2$Dipartimento di Ingegneria, Universit\`a degli Studi di Perugia\\
\texttt{carla.binucci@unipg.it, alessandra.tappini@studenti.unipg.it}
\smallskip\\
$^3$School of Applied Mathematical \& Physical Sciences, NTUA, Greece\\
\texttt{symvonis@math.ntua.gr, crisraft@mail.ntua.gr}
}
\date{} 
\begin{document}

\maketitle

\begin{abstract}
In this work, we continue the study of vertex colorings of graphs, in which adjacent vertices are allowed to be of the same color as long as each \emph{monochromatic connected component} is of relatively small cardinality. We focus on colorings with two and three available colors and present improved bounds on the size of the monochromatic connected components for two meaningful subclasses of planar graphs, namely maximal outerplanar graphs and complete planar 3-trees.
\end{abstract}

\section{Introduction}
\label{sec:introduction}
Colorings of graphs form one of the most fascinating and well-studied topics in graph theory. Early results date back to 1852~\cite{DBLP:journals/tis/Restivo04} and since then several variants have been proposed and studied over the~years (e.g., vertex colorings~\cite{DBLP:journals/jal/BeigelE05,DBLP:journals/siamcomp/BjorklundHK09,DBLP:conf/cocoon/FominGS07,DBLP:journals/ipl/Lawler76,DBLP:journals/ipl/Wilf84}, edge colorings~\cite{DBLP:journals/jal/ChrobakY89,DBLP:journals/algorithmica/ColeK08,Vizing64}, total colorings~\cite{DBLP:journals/jct/McDiarmidR93,DBLP:journals/dmgt/WangW04,DBLP:journals/jco/ZhangHL15}); for a survey refer to~\cite{Pardalos1999}. In the classical vertex coloring problem, the vertices of a given graph must be colored with a minimum number of colors, so that adjacent vertices have different colors. In other words, the \emph{monochromatic connected components} (or \emph{monochromatic components}, for short) must form an independent set, i.e., they must be single vertices. A notable result on this topic is the so-called \emph{$4$-color theorem}~\cite{DBLP:conf/ascm/Gonthier07}, which states that four colors suffice to color the vertices of any planar graph such that adjacent vertices are of different colors; deciding whether a planar graph can be colored with three colors such that adjacent vertices are of different colors is a well-known NP-complete problem~\cite{DBLP:books/fm/GareyJ79}. 

\smallskip

In this paper, we continue the study of a variant of the vertex coloring problem, where each monochromatic component may be formed by more than one vertex but we require its cardinality to be ``small''~\cite{DBLP:conf/focs/KleinbergMRV97}. Formally, for a graph $G$ and an integer $t$ we denote by $mcc_t(G)$ the smallest integer $m$ such that there exists a coloring of the vertices of $G$ with $t$ colors so that any monochromatic component of $G$ has at most $m$ vertices. The classical vertex coloring is a restriction of this problem, in which one seeks for the smallest $t$ so that $mcc_t(G)=1$.

This problem is closely related to the well-known HEX lemma~\cite{Gale79}, which in its simplest form states that in any $2$-coloring of a $\sqrt{n} \times \sqrt{n}$ triangulated grid there exists either a monochromatic path from the left to the right side of the grid in one of the colors or a monochromatic path from the top side to the bottom side in the other color; see also~\cite{DBLP:journals/siamdm/MatousekP08}. As a consequence, not all (planar) graphs of bounded degree admit $2$-colorings in which all monochromatic components are of bounded size. Lov\'asz~\cite{Lovasz1975269} proved that any (not necessarily planar) cubic graph $G$ admits a $2$-coloring in which all monochromatic components are of size at most two, i.e., $mcc_2(G)\leq 2$. Also, such graphs admit $2$-colorings in which one color-class is an independent set, while the other one induces monochromatic components of size at most 750~\cite{DBLP:journals/jct/BerkeS07}.

Alon et al.~\cite{DBLP:journals/jct/AlonDV03} generalized these results to graphs with maximum degree 4 by proving that such graphs admit $2$-colorings in which all monochromatic components are of size at most $57$. Haxell et al.~\cite{DBLP:journals/jct/HaxellST03} improved this bound to $6$ and also proved that every graph with maximum degree 5 can be $2$-colored so that all monochromatic components have size at most 20000, a bound which was later reduced to 1908 by Berke in his Ph.D.\ thesis~\cite{BerkeThesis}. Linial et al.~\cite{DBLP:journals/endm/LinialMST07} showed that there exist $7$-regular $n$-vertex graphs whose monochromatic components are of size $\Omega(n)$ in any $2$-coloring. Note that for $6$-regular graphs the corresponding order of magnitude ranges between $\sqrt{n}$ and $n$~\cite{DBLP:journals/jct/AlonDV03,Gale79}. Also, Linial et al.~\cite{DBLP:journals/endm/LinialMST07} proved that $mcc_2(G)=O(n^{2/3})$, where $G$ is planar or belongs to a minor-closed class.

As for three or more colors, Kleinberg et al.~\cite{DBLP:conf/focs/KleinbergMRV97} and Alon at al.~\cite{DBLP:journals/jct/AlonDV03} constructed planar graphs that do not admit $3$-colorings so that each monochromatic component has bounded size. Esperet et al.~\cite{DBLP:journals/cpc/EsperetJ14} proved that there is a function $f: \mathbb{N} \rightarrow \mathbb{N}$ so that every planar graph with maximum degree $\Delta$ admits a $3$-coloring in which each monochromatic component has size at most $f(\Delta)$, where $f(\Delta) = (14\Delta)^{23\Delta+8}$. When a fixed number $t$ of colors is available and $G$ is of bounded tree-width, Linial et al.~\cite{DBLP:journals/endm/LinialMST07} proved that $mcc_t(G)=O(n^{1/t})$.

Note that there is a fairly rich literature on colorings of graphs where each monochromatic component is acyclic (see, e.g.,~\cite{DBLP:journals/dm/DingO96}) or has a small diameter (see, e.g.,~\cite{DBLP:journals/combinatorica/LinialS93}), or has small degree (see, e.g.,~\cite{DBLP:journals/jgt/CowenGJ97}).

\medskip\noindent\textbf{Our contribution.} In this paper, we study $2$- and $3$-colorings for two important subclasses of planar graphs:
\begin{inparaenum}[(i)]
\item Maximal outerplanar graphs, and
\item Complete planar 3-trees.
\end{inparaenum}
Note that both these classes are meaningful subclasses of planar graphs which are of interest in computational complexity theory, since many NP-complete graph problems are solvable in linear time on these graphs (see, e.g.,~\cite{Takamizawa:1982:LTC}). We also note that both of these classes of graphs are of bounded tree-width, which implies that any graph $G$ belonging to them has $mcc_2(G) = O(n^{1/2})$ and $mcc_3(G) = O(n^{1/3})$~\cite{DBLP:journals/endm/LinialMST07}. We present improvements upon these bounds: 
\begin{itemize} 
\item For an outerplanar graph $G$ with maximum degree $\Delta$, Berke has shown that $mcc_2(G) \leq 2 (\Delta-1)$~\cite{BerkeThesis}. Observe that this upper bound drastically improves the general one by Esperet et al.~\cite{DBLP:journals/cpc/EsperetJ14} for the class of outerplanar graphs. We present an efficient dynamic-programming based algorithm which for a given outerplanar graph~$G$ determines the exact value of $mcc_2(G)$. We further show improved upper bounds for special subclasses of outerplanar graphs. We note that outerplanar graphs admit proper $3$-colorings; see, e.g.,~\cite{Proskurowski:1986:EVE}.

\item For a complete planar 3-tree $G$, we prove that $mcc_2(G) = O(n^{0.387})$ and $mcc_3(G) = O(\log n)$. For two colors we also give a corresponding lower bound of $\Omega(n^{0.315})$. However, our bounds do not transfer to general planar $3$-trees; Linial~\cite{DBLP:journals/endm/LinialMST07} presented infinitely many planar graphs, that are subgraphs of planar $3$-trees and meet the aforementioned upper bounds of $O(n^{1/2})$ and $O(n^{1/3})$ for $2$- and $3$-colorings, respectively. 
\end{itemize}

The rest of the paper is organized as follows. Preliminary definitions are given in Section~\ref{sec:preliminaries}. The results about maximal outerplanar graphs are presented in Section~\ref{sec:outerplanar}, while those about complete planar $3$-trees in Section~\ref{sec:planar-3-trees}. Conclusions and open problems can be found in Section~\ref{sec:conclusions}.

\section{Preliminaries}
\label{sec:preliminaries}

Let $G=(V,E)$ be a graph. $G$ is \emph{$1$-connected} if there is a path between any two vertices of $G$. $G$ is \emph{$k$-connected}, for $k \geq 2$, if the removal of $k-1$ vertices leaves $G$ $1$-connected. A $2$-connected graph is also called \emph{biconnected}.

A \emph{planar drawing} $\Gamma$ of $G$ is a drawing in the plane such that each vertex $v\in V$ is drawn as a distinct point $p_v$; each edge $(u,v)\in E$ is drawn as a simple curve connecting $p_u$ and $p_v$ and no two edges intersect in $\Gamma$ except at their common endpoints. $\Gamma$ divides the plane into topologically connected regions called \emph{faces}; the unbounded face is called the \emph{outer face}, all other faces are called \emph{internal faces}. Each face is described by the circular sequence of vertices encountered while moving clockwise along its boundary. The description of the set of faces determined by a planar drawing of $G$ is called a \emph{planar embedding} of $G$. Graph $G$ together with a given planar embedding is an \emph{embedded planar graph}.

An \emph{outerplanar graph} is a planar graph that admits a planar embedding such that all vertices are on the outer face. A \emph{maximal outerplanar graph} is an outerplanar graph to which no edge can be added while preserving outerplanarity (see Figure~\ref{fi:maximal outerplanar}).

\begin{figure}[ht]
\begin{subfigure}{0.45\columnwidth}
\centering
\includegraphics[width=.5\linewidth,page=1]{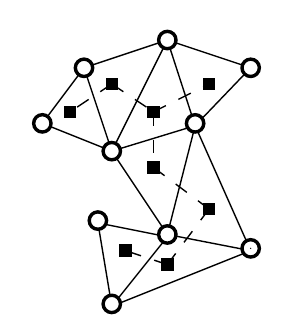}
\caption{}
\label{fi:maximal outerplanar}
\end{subfigure}
\begin{subfigure}{0.45\columnwidth}
\centering
\includegraphics[width=.5\linewidth,page=2]{figures/preliminaries}
\caption{}
\label{fi:complete_planar_3tree}
\end{subfigure}
\caption{
(a)~A maximal outerplanar graph $G$ and its weak dual represented by squared vertices and dashed edges. 
(b)~A complete planar $3$-tree with $2$ levels.}
\label{fi:coloring_one_fan}
\end{figure}

The \emph{dual} of an embedded planar graph $G$ is a graph that has a vertex for each face of $G$ and an edge between two face-vertices if and only if the corresponding faces share an edge in the embedding of $G$. The \emph{weak dual} of $G$ is the subgraph of the dual of $G$ obtained by omitting the face-vertex of the outer face of $G$ (see Figure~\ref{fi:maximal outerplanar}). The weak dual of an outerplanar graph is a forest in general. For a biconnected outerplanar graph the weak dual is a tree, and vice versa.

A complete planar $3$-tree is a graph formally defined as follows. A $3$-cycle is a planar $3$-tree with $0$ levels. A complete planar $3$-tree with $k\geq 1$ levels is obtained from one with $k-1$ levels by inserting a vertex in every internal face and connecting it to all of its vertices (see Figure~\ref{fi:complete_planar_3tree}).

\section{Outerplanar graphs}
\label{sec:outerplanar}
In this section, we focus on outerplanar graphs. We start with a dynamic programming based algorithm which for a given maximal outerplanar graph~$G$ determines the exact value of $mcc_2(G)$.

\begin{theorem}\label{thm:outerpathRecognition}
There exists a dynamic programming based algorithm which in polynomial time computes a $2$-coloring of a maximal outerplanar graph, such that the size of the largest monochromatic component is minimized.
\end{theorem}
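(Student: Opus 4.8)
The plan is to exploit the recursive structure of triangulated polygons. We may assume $n:=|V(G)|\ge 3$ (smaller cases are trivial), so that $G$ is $2$-connected, its outer face is a Hamiltonian cycle $v_1,v_2,\dots,v_n$, and every internal face is a triangle; equivalently, the weak dual of $G$ is a tree of maximum degree~$3$. First I would reduce the optimization problem to a feasibility problem: for a threshold $m\in\{1,\dots,n\}$, decide whether $G$ admits a $2$-coloring in which every monochromatic component has at most $m$ vertices. Since $mcc_2(G)\le n$ and feasibility is monotone in $m$, computing the smallest feasible $m$ (by binary search, or a linear scan) yields the exact value of $mcc_2(G)$, and a realizing coloring is recovered by standard backtracking through the dynamic program; so it suffices to solve the feasibility problem in polynomial time.

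For the feasibility test, root the standard decomposition of the triangulated polygon at the outer edge $(v_1,v_n)$. For every edge $(v_i,v_j)$ of $G$ with $i<j$ that bounds the polygon on the side of the index interval $[i,j]$, let $P_{ij}$ denote the subgraph of $G$ induced by $\{v_i,v_{i+1},\dots,v_j\}$ (so $P_{1n}=G$, and there are only $O(n)$ such subgraphs, one per edge of $G$). I would build a table $T_{ij}$ whose entries are indexed by \emph{interface signatures} $\sigma=(c_i,c_j,b,s_i,s_j)$, where $c_i,c_j\in\{1,2\}$ are the colors of $v_i,v_j$; $b\in\{\text{same},\text{distinct}\}$ records whether $v_i$ and $v_j$ lie in the same monochromatic component of the (partial) coloring of $P_{ij}$; and $s_i,s_j\in\{1,\dots,m\}$ are the sizes of the monochromatic components containing $v_i$ and $v_j$ within $P_{ij}$ (with $s_i=s_j$ and $c_i=c_j$ forced when $b=\text{same}$). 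I set $T_{ij}[\sigma]=\mathtt{true}$ iff there is a $2$-coloring of $P_{ij}$ realizing $\sigma$ in which every monochromatic component that contains neither $v_i$ nor $v_j$ has size at most $m$. The key observation making the state space bounded is monotonicity: for a fixed coloring of $G$, the monochromatic component of any vertex inside an induced subgraph is contained in its component in $G$, so a boundary component that already exceeds $m$ can never be repaired; hence capping $s_i,s_j$ at $m$ and discarding such partial colorings is sound.

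The base case is an outer edge, $j=i+1$: $P_{i,i+1}$ is a single edge, and we enter the four color combinations directly — if $c_i=c_j$ the endpoints form one active component of size $2$ (kept only if $m\ge 2$), otherwise two components of size $1$. For the recursive case, let $v_k$ with $i<k<j$ be the apex of the unique triangle incident to the edge $(v_i,v_j)$ on the $P_{ij}$-side; it splits $P_{ij}$ into $P_{ik}$ and $P_{kj}$, which share only $v_k$, and forming $P_{ij}$ from them adds exactly the chord/edge $(v_i,v_j)$. To compute $T_{ij}$, I would iterate over all pairs $\sigma^{(1)}\in T_{ik}$, $\sigma^{(2)}\in T_{kj}$ that agree on the color of $v_k$, glue the colorings at $v_k$, and recompute the component data: the components containing $v_k$ on the two sides coalesce (their sizes add, minus one for the shared vertex $v_k$), this may transitively pull in the components of $v_i$ and/or $v_j$, and the new edge $(v_i,v_j)$ additionally merges the components of $v_i$ and $v_j$ when $c_i=c_j$ — a bounded union-find over at most four component representatives with the appropriate double-count corrections. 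Since all neighbors of $v_k$ lie in $[i,j]$, the component of $v_k$ is now closed, so the pair is rejected unless that component has size at most $m$; the resulting $P_{ij}$-signature (which drops $v_k$ and keeps only the updated data for $v_i$ and $v_j$) is marked $\mathtt{true}$, again only if its recorded sizes are at most $m$. Finally, $G$ admits the desired coloring for threshold $m$ iff $T_{1n}$ has a $\mathtt{true}$ entry.

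For the running time, there are $O(n)$ subgraphs $P_{ij}$, each table has $O(m^2)=O(n^2)$ signatures, and each table is filled by examining $O(m^4)=O(n^4)$ signature pairs in $O(1)$ time each (plus $O(\log n)$ for size arithmetic), so one feasibility test costs $O(n^5)$ time and the whole algorithm runs in $O(n^5\log n)$ time — polynomial, as claimed. The only genuinely delicate point is the bookkeeping in the recursive step: correctly detecting when gluing at $v_k$ and adding the edge $(v_i,v_j)$ causes previously separate monochromatic components among $\{v_i,v_k,v_j\}$ to merge (so their sizes are not double-counted) and correctly closing off the component of $v_k$; the monotonicity observation above is what bounds the signature space and makes the pruning valid.
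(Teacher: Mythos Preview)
Your proposal is correct and follows essentially the same approach as the paper: dynamic programming over the weak-dual tree (equivalently, the recursive triangle decomposition of the polygon), with states at each attachment edge recording the colors of the two poles and the sizes of the monochromatic components containing them. The only notable differences are that the paper tracks the maximum sizes of the already-closed black and white components as two extra state coordinates (so it minimizes in a single pass rather than wrapping a feasibility test in a search over $m$), and it invokes Berke's $2(\Delta-1)$ bound to cap all size indices at $O(\Delta)$, yielding a running time of $O(n\Delta^8)$ in place of your $O(n^5\log n)$.
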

\begin{proof}
Let $G$ be a maximal outerplanar graph on $n \geq 4$ vertices and assume without loss of generality that $G$ is embedded according to its outerplanar embedding. Since $G$ is maximal, $G$ is also biconnected, and it follows that the weak dual $\cal T$ of $G$ is a tree. We further assume that $\cal T$ is rooted at a leaf $\rho$ of it; note that $\cal T$ has at least two leaves, since $n \geq 4$. Our algorithm is mainly based on a definition of \emph{equivalent colorings}, which we use in a bottom-up traversal of $\mathcal{T}$ to maintain representative colorings from each class of equivalent colorings.

For a node $\mu$ of $\mathcal{T}$, we denote by $G(\mu)$ the subgraph of $G$ whose weak dual is the subtree of $\mathcal{T}$ rooted at $\mu$. Consider a node $\mu$ of $\mathcal{T}$ that is not the root of $\mathcal{T}$, that is, $\mu \neq \rho$. Let $\nu$ be the parent of $\mu$ in $\mathcal{T}$. Let also $f(\mu)$ and $f(\nu)$ be the faces of $G$ corresponding to nodes $\mu$ and $\nu$ of $\mathcal{T}$. Finally, let $(u,v)$ be the edge of $G$ shared by $f(\nu)$ and $f(\mu)$. We say that $(u,v)$ is the \emph{attachment edge} of $G(\mu)$ to $G(\nu)$, while its endpoints $u$ and $v$ are the \emph{poles} of $G(\mu)$ and $G(\nu)$. The attachment edge of root $\rho$ is any edge of face $f(\rho)$ that is incident to the outer face (since $G$ is biconnected and $\rho$ is a leaf, this edge always exists).

Consider a coloring $c$ of $G(\mu)$ with two colors, black and white, and let $c(B_\mu)$ and $c(W_\mu)$ be the sizes of the largest monochromatic components in black and white, respectively, that contain neither $u$ nor $v$. Let $c(u)$ and $c(v)$ be the colors of $u$ and $v$ in $c$, respectively. Let also $s(c(u))$ and $s(c(v))$ be the size of the monochromatic components containing $u$ and $v$ in $c$, respectively. Since $u$ and $v$ are adjacent, it follows that if $c(u)=c(v)$ holds, then $s(c(u))=s(c(v))$ also holds. We say that two colorings $c$ and $c'$ of $G(\mu)$ are \emph{equivalent} with respect to the attachment edge $(u,v)$ if and only if
\begin{inparaenum}[(i)]
\item $u$ has the same color in $c$ and $c'$, that is, $c(u)=c'(u)$,
\item $v$ has the same color in $c$ and $c'$, that is, $c(v)=c'(v)$,
\item the size of the monochromatic component containing $u$ is the same in $c$ and $c'$, that is, $s(c(u))=s(c'(u))$,
\item the size of the monochromatic component containing $v$ is the same in $c$ and $c'$, that is, $s(c(v))=s(c'(v))$,
\item the size of the largest black monochromatic component that contains neither $u$ nor $v$ is the same in $c$ and $c'$, that is, $c(B_\mu)=c'(B_\mu)$, and
\item the size of the largest white monochromatic component that contains neither $u$ nor $v$ is the same in $c$ and $c'$, that is, $c(W_\mu)=c'(W_\mu)$.
\end{inparaenum}
This definition of equivalence determines a partition of the colorings of $G(\mu)$ into a set of equivalence classes.

To store a representative from each class of equivalent colorings of $G(\mu)$ we employ a table $T_\mu$, in which entry 
\[T_\mu[\kappa_1,\kappa_2,\lambda_1,\lambda_2,\sigma_1,\sigma_2]\] 
is true if and only if there exists a coloring of $G(\mu)$ in which
\begin{inparaenum}[(i)]
\item the color of $u$ is $\kappa_1$,
\item the color of $v$ is $\kappa_2$,
\item the size of the monochromatic component containing $u$ is $\lambda_1$,
\item the size of the monochromatic component containing $v$ is $\lambda_2$,
\item the size of the largest black monochromatic component that contains neither $u$ nor $v$ is $\sigma_1$, and
\item the size of the largest white monochromatic component that contains neither $u$ nor $v$ is $\sigma_2$.
\end{inparaenum}
Table $T_\mu$ is of size $O(\Delta^4)$, since $\kappa_1,\kappa_2 \in \{B,W\}$ and due to Berke~\cite{BerkeThesis} $0 \leq \lambda_1,\lambda_2,\sigma_1,\sigma_2 \leq 2\Delta$ holds.

When visiting a node $\mu$ of $\mathcal{T}$ in the traversal of $\mathcal{T}$, we assume that for each of the two children $\mu_1$ and $\mu_2$ of $\mu$ in $\mathcal{T}$, we have computed a representative coloring from each class of equivalent colorings and that we have stored this information in $T_{\mu_1}$ and $T_{\mu_2}$. To compute $T_\mu$, we consider each pair of possible equivalence classes of colorings of $G(\mu_1)$ and $G(\mu_2)$ and check whether their combination yields a representative in some equivalence class of colorings of $G(\mu)$; initially, all entries of $T_\mu$ are false. For each of $T_{\mu_1}$ and $T_{\mu_2}$, consider the following entries:
\[T_{\mu_1}[\overline{\kappa}_1,\overline{\kappa}_2,\overline{\lambda}_1,\overline{\lambda}_2,\overline{\sigma}_1,\overline{\sigma}_2] \mbox{~and~} T_{\mu_2}[\hat{\kappa}_1,\hat{\kappa}_2,\hat{\lambda}_1,\hat{\lambda}_2,\hat{\sigma}_1,\hat{\sigma}_2].\] 
If $\overline{\kappa}_2 \neq \hat{\kappa}_1$, then we can ignore this pair and proceed to a next one, as the vertex \emph{shared} by $G(\mu_1)$ and $G(\mu_2)$ has different colors in the colorings of $G(\mu_1)$ and $G(\mu_2)$, and hence cannot yield a valid coloring for $G(\mu)$. So, we can assume without loss of generality that $\overline{\kappa}_2 = \hat{\kappa}_1$. To simplify the presentation, we assume that the color of the shared vertex of $G(\mu_1)$ and $G(\mu_2)$ is black. The case where this vertex is colored white is symmetric.

We proceed by considering two cases:
\begin{inparaenum}[(i)]
\item \label{c:neq} $\overline{\kappa}_1 \neq \hat{\kappa}_2$, and
\item \label{c:eq}  $\overline{\kappa}_1 = \hat{\kappa}_2$.
\end{inparaenum}
In Case~(\ref{c:neq}), we first assume that $\overline{\kappa}_1$ and $\hat{\kappa}_2$ correspond to white and black colors, respectively. In this case, it holds that $\hat{\lambda}_1=\hat{\lambda}_2$. Since we have assumed that the color of the shared vertex of $G(\mu_1)$ and $G(\mu_2)$ is black, it follows that $T_\mu[\overline{\kappa}_1, \hat{\kappa}_2, \overline{\lambda}_1, \hat{\lambda}_1, \max \{\overline{\sigma}_1,\hat{\sigma}_1\},\max \{\overline{\sigma}_2,\hat{\sigma}_2\}]$ is true. We cope with the case where $\overline{\kappa}_1$ and $\hat{\kappa}_2$ correspond to black and white colors, respectively, symmetrically. This completes our analysis for Case~(\ref{c:neq}).

In Case~(\ref{c:eq}), we again consider two subcases, namely, either both $\overline{\kappa}_1$ and $\hat{\kappa}_2$ correspond to black or both to white. Note that these cases are not symmetric, as we have assumed that the color of the shared vertex of $G(\mu_1)$ and $G(\mu_2)$ is black. In the former case, it holds that $\overline{\lambda}_1=\overline{\lambda}_2$ and $\hat{\lambda}_1=\hat{\lambda}_2$. Since we have assumed that the color of the shared vertex of $G(\mu_1)$ and $G(\mu_2)$ is black, it follows that $T_\mu[\overline{\kappa}_1, \hat{\kappa}_2, \overline{\lambda}_1+\hat{\lambda}_1-1, \overline{\lambda}_1+\hat{\lambda}_1-1, \max \{\overline{\sigma}_1,\hat{\sigma}_1\},\max \{\overline{\sigma}_2,\hat{\sigma}_2\}]$ is true. To complete the description of our case analysis, it remains to consider the more involved case where both $\overline{\kappa}_1$ and $\hat{\kappa}_2$ correspond to white. Since the color of the shared vertex of $G(\mu_1)$ and $G(\mu_2)$ is black, it follows that it may form a new monochromatic component (in black) that is not incident to the poles of $G(\mu)$. Hence, we set $T_\mu[\overline{\kappa}_1, \hat{\kappa}_2, \overline{\lambda}_1, \hat{\lambda}_2, \max \{\overline{\sigma}_1,\hat{\sigma}_1,\overline{\lambda}_2+\hat{\lambda}_1\}-1,\max \{\overline{\sigma}_2,\hat{\sigma}_2\}]$ to true. This completes our analysis for Case~(\ref{c:eq}).

At the end of the bottom-up traversal of $\mathcal{T}$, we search in  $T_\rho$ for the entry $T_\rho[\kappa_1,\kappa_2,\lambda_1,\lambda_2,\sigma_1,\sigma_2]$ which minimizes the maximum of $\lambda_1$, $\lambda_2$, $\sigma_1$ and $\sigma_2$. The corresponding coloring of $G$ can be easily constructed by traversing $\mathcal{T}$ top-down, and by following the choices performed during the bottom-up visit. Note that for a node $\mu$ of $\mathcal{T}$ the computation of table $T_\mu$ can be done in $O(\Delta^8)$, since each of the tables $T_{\mu_1}$ and $T_{\mu_2}$ of the children $\mu_1$ and $\mu_2$ of $\mu$ are of size $O(\Delta^4)$. Hence, the time-complexity of our algorithm is $O(n \cdot \Delta^8)$, which is linear when $G$ has bounded degree.
\end{proof}

In the rest of this section, we present two side-results that form improvements on the general upper bound of $2 (\Delta-1)$ by Berke~\cite{BerkeThesis} for special subclasses of outerplanar graphs with maximum degree $\Delta$. 

The first subclass of outerplanar graphs that we consider is the class of snowflakes. Intuitively, a \emph{snowflake} is a biconnected maximal outerplanar graph whose weak dual is a complete binary tree. Formally, a snowflake $S_h$ of \textit{height} $h$ is defined recursively as follows (see Figure~\ref{fig:snowflake_coloring} for an example). For $h=0$, $S_h$ is a $3$-cycle, i.e., 3 vertices and 3 edges between them. For $S_h$ with $h > 0$, we take $S_{h-1}$ and extend it by attaching on each edge of the outer face a path of length 2. The new vertices are \emph{at height} $h$ and the length of the outer face is doubled.
When we add the set of vertices at height $h$, for $h>1$ they are adjacent to two vertices that we call \emph{ancestors}: One of them is at height $h-1$, and is denoted by $p(v)$, while the other has height $< h-1$. The height $h$ of the snowflake is related to the maximum degree of any vertex, namely, $deg(v) = 2h+2$ where $v$ is any vertex at height $0$.

\begin{figure}[ht]
\centering
\includegraphics[width=.25\linewidth,page=2]{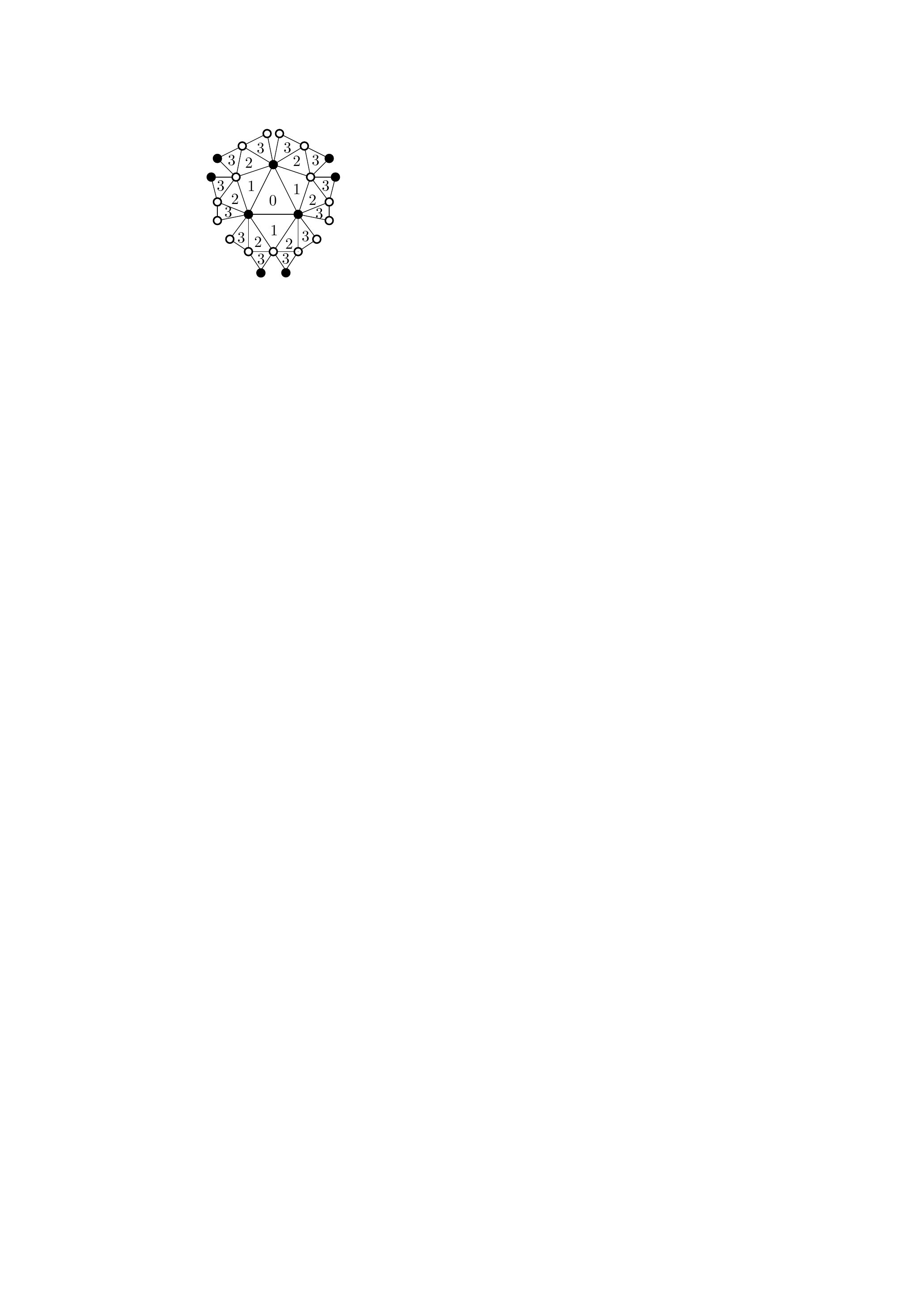}
\caption{A snowflake $S_h$ of height $h=3$. Vertex $v$ is at height $3$ and has $u$ and $w$ as its ancestors; $u=p(v)$ is at height $2$, while $w$ is at height $0$. The vertices of $S_3$ are colored according to Theorem~\ref{thm:snowflake}.}
\label{fig:snowflake_coloring}
\end{figure}

\begin{theorem}\label{thm:snowflake}
For a snowflake $S$ of maximum degree $\Delta \geq 4$, $mcc_2(S) \leq \Delta-3$ holds.
\end{theorem}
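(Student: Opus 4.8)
The plan is to exhibit an explicit $2$-coloring of the snowflake $S = S_h$ (where $\Delta = 2h+2$, so that the target bound $\Delta - 3 = 2h-1$) and bound the size of every monochromatic component by $2h-1$. The natural idea is to color by height: the recursive construction of $S_h$ from $S_{h-1}$ adds, on each outer edge, a path of length two whose midpoint is a new vertex at height $h$ with two ancestors, $p(v)$ at height $h-1$ and a second ancestor $w$ at height $<h-1$. First I would fix the coloring of the base $3$-cycle $S_0$ (say two black vertices and one white, or whatever turns out to balance the recursion) and then, inductively, assign to each new vertex $v$ at height $h$ a color determined by the colors of its two ancestors: concretely, give $v$ the color that is \emph{different} from $p(v)$ (the height-$(h-1)$ ancestor), unless that would create too large a component through the low ancestor $w$, in which case adjust. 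The figure referenced (Figure~\ref{fig:snowflake_coloring}) indicates the intended pattern, so I would reverse-engineer the exact rule from it and state it cleanly as: a vertex at height $h>0$ is colored opposite to its ancestor at height $h-1$.

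With such a rule, a monochromatic component cannot contain two vertices at consecutive heights joined through the $p(\cdot)$-edge, so the only way a component grows is via the ``long'' ancestor edges (from a height-$h$ vertex to its height-$<h-1$ ancestor) and via edges inside a single outer path. I would analyze the structure of a monochromatic component $C$: track, for each vertex of $C$, its height, and argue that the heights occurring in $C$ are controlled. The key combinatorial claim is that within one monochromatic color class the ``long'' edges form a very sparse structure — each low-height vertex $w$ can have only boundedly many same-colored height-$h$ children for each relevant $h$, because the two new vertices added on the two outer edges incident to a given edge at the previous stage land on opposite sides and, under the opposite-to-$p(v)$ rule, receive colors forced by the already-fixed coloring of $S_{h-1}$. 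Summing the possible contributions over heights $h, h-1, \dots$ down to the level of $w$ should telescope to at most $2h-1$ vertices. I would make this precise by an induction on $h$: assuming every monochromatic component of $S_{h-1}$ has size $\le 2(h-1)-1 = 2h-3$, show that adding height-$h$ vertices enlarges any component by at most $2$, and that genuinely new components created at height $h$ have size at most... (again $\le 2h-1$, typically much smaller).

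The main obstacle I anticipate is the bookkeeping around the ``long'' ancestor edges: unlike the $p(\cdot)$-edges, these jump across many levels, so a single monochromatic component can in principle weave between height $h$ and some fixed low height repeatedly. I expect the crux of the argument to be a lemma stating that all vertices of a given monochromatic component that lie at height $<h-1$ actually form a monochromatic component \emph{already in} $S_{h-1}$ (or even in a smaller snowflake), so that the induction hypothesis applies to them, and that the height-$h$ and height-$(h-1)$ vertices attached to them contribute only a constant per attachment point, with the total number of attachment points bounded by the degree constraint. Verifying that the opposite-to-$p(v)$ rule is consistent — i.e., that it never forces a monochromatic $p(\cdot)$-edge and that it is well-defined given the fixed base coloring — and then carefully counting the at-most-$2$-per-level growth to reach exactly $2h-1 = \Delta-3$ will be the technical heart; the parity/off-by-one in hitting $\Delta-3$ rather than $\Delta-2$ is where the specific choice of base coloring and of which ancestor dictates the color will matter.
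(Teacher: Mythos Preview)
Your proposed coloring rule --- color each new vertex $v$ \emph{opposite} to $p(v)$ --- is the reverse of what works, and it fails already on $S_2$. With a monochromatic base triangle (say black), your rule makes every height-$1$ vertex white and every height-$2$ vertex black; then each height-$2$ vertex is joined by its ``long'' edge to a height-$0$ vertex of the same color, so all three base vertices together with all six height-$2$ vertices lie in a single black component of size $9$, while $\Delta-3=3$. More generally, your rule colors by the parity of the height, and the long edges (which skip at least two levels) are then frequently monochromatic and glue everything of one parity into a giant component. The ``bookkeeping around the long ancestor edges'' that you anticipate is not a technicality to be cleaned up; it is the actual obstruction, and your inductive claim that ``adding height-$h$ vertices enlarges any component by at most $2$'' is simply false under this rule.

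The paper's rule goes the other way: if the two ancestors of $v$ have \emph{different} colors, give $v$ the \emph{same} color as $p(v)$; only if they agree does $v$ get the opposite color. This forces every long edge $(v,w)$ to be bichromatic, so the only monochromatic edges are of the form $(v,p(v))$. Orienting each such edge toward $p(v)$, every vertex has out-degree at most $1$ and in-degree at most $2$, and one checks that a vertex with in-degree $2$ must have out-degree $0$ (its two $p$-children received its color only because their other ancestors --- which are precisely $v$'s two ancestors --- differ from $v$). Hence every monochromatic component (apart from the base triangle, of size $3$) consists of two height-decreasing $p$-paths meeting at a common lowest vertex of height at least $1$, giving at most $2(h-1)+1=2h-1=\Delta-3$ vertices. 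No induction on $h$ is needed once the rule is right; the component shape is read off directly from the orientation.
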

\begin{proof}
Assume that $S=S_h$ has height $h$. We color its vertices as follows: Initially, the vertices of height $0$ are colored with the same color. For a vertex $v$ of height $k$, we consider the colors of its two ancestors: If they both have the same color, we give to vertex $v$ the opposite color, otherwise we assign to $v$ the color of vertex $p(v)$.

We claim that each monochromatic component is a path. By construction, the edges between monochromatic vertices are of the form $(v,p(v))$, which we consider to be directed towards $p(v)$. Hence, they form monochromatic trees with direction in decreasing height. Note that if a vertex $v$ has two incoming edges, then its two ancestors both have different color than $v$. So, we can conclude that the monochromatic components consist of at most two height-decreasing paths that end in a common vertex. Since the paths have length at most  $h-1$, the component has size $2(h-1)+1$, which is equal to $\Delta-3$. That concludes the proof.
\end{proof}

The second subclass of outerplanar graphs that we consider is the class of outerpaths. Formally, an \emph{outerpath} is a biconnected maximal outerplanar graph whose weak dual is a path. Before we proceed with the description of our coloring scheme, we introduce some necessary definitions and notation. Let $\mathcal{O}$ be an outerpath (see Figure~\ref{fi:outerpath_coloring} for an example). We call \emph{spine vertices} of $\mathcal{O}$ the vertices $v_1, v_2, \ldots , v_k$ that have degree at least four in $\mathcal{O}$ and, for the sake of simplicity, we denote by $v_0$ and $v_{k+1}$ the two (unique) vertices of degree two of $\mathcal{O}$. Consider a spine vertex $v_i$ of $\mathcal{O}$ ($1\leq i\leq k$) and denote by $e'$ and $e''$ the edges $(v_i,v_{i-1})$ and $(v_i,v_{i+1})$ of $\mathcal{O}$, respectively. We call \emph{fan} $f_i$ the subgraph of $\mathcal{O}$ induced by $v_i$, $v_{i-1}$, $v_{i+1}$ and by all the vertices $w_j$ such that edge $(v_i,w_j)$ is between $e'$ and $e''$ in counterclockwise (clockwise, resp.) order around $v_i$ if $i$ is odd (even, resp.). Denote by $|f_i|$ the number of vertices of $f_i$. Observe that $|f_i|\geq 3$ for each $2\leq i\leq k-1$, $|f_i|\geq 4$ for $i=\{1,k\}$, and that $f_i$ and $f_{i+1}$ share the two spine vertices $v_i$ and $v_{i+1}$ for each $1\leq i \leq k-1$. We say that a vertex $v$ is \emph{isolated} if all its adjacent vertices have different color.

The last tool that we need for our analysis is the following lemma, proven in several papers; see, e.g.,~\cite{BerkeThesis}. Recall that a \emph{wheel} $W$ on $n$ vertices is a graph formed by $n-1$ vertices connected in a cycle, called the \emph{rim} of $W$, and a so-called \emph{central vertex} that has an edge, called \emph{spoke}, towards each vertex of the rim.
\begin{lemma}[Berke~\cite{BerkeThesis}]\label{lem:wheel_graph}
For a planar wheel W on $n$ vertices, $mcc_2(W) = \Theta(\sqrt{n})$ holds.
\end{lemma}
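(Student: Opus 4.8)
The plan is to establish matching upper and lower bounds for $mcc_2(W)$ where $W$ is a wheel on $n$ vertices, i.e., a cycle $C_{n-1}$ (the rim) plus a central vertex $c$ adjacent to every rim vertex. For the lower bound $mcc_2(W) = \Omega(\sqrt{n})$, I would argue by contradiction: fix any $2$-coloring with colors black and white, and without loss of generality say $c$ is black. The black monochromatic component containing $c$ then includes $c$ together with every maximal run of consecutive black vertices along the rim, since each such run is joined to $c$. If every monochromatic component has size at most $m$, then in particular the component of $c$ has at most $m$ vertices, so at most $m-1$ rim vertices are black (across all runs). Hence at least $(n-1)-(m-1) = n-m$ rim vertices are white, and these white vertices lie in maximal white runs along the rim; since $c$ is black, each such white run forms its own white monochromatic component (the only potential connection between runs is through $c$, which is unavailable). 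Thus one white run has size at least $(n-m)/(\text{number of runs})$. The subtle point is bounding the number of runs: the number of maximal white runs equals the number of maximal black runs (cyclically they alternate), which is at most the number of black vertices, at most $m-1$. Therefore some white component has size at least $(n-m)/(m-1)$, which must also be at most $m$; combining $(n-m)/(m-1) \le m$ gives $n - m \le m(m-1)$, i.e., $n \le m^2$, so $m \ge \sqrt{n}$, giving $mcc_2(W) = \Omega(\sqrt{n})$.

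For the upper bound $mcc_2(W) = O(\sqrt{n})$, I would give an explicit coloring. Color the central vertex $c$ black. Partition the rim into $\lceil\sqrt{n}\rceil$ consecutive blocks each of size $O(\sqrt{n})$, and within each block color one ``separator'' vertex black and all the rest white. The black component of $c$ then consists of $c$ plus the $\lceil\sqrt{n}\rceil$ isolated black separator vertices (each adjacent to $c$ but, by construction, not adjacent on the rim to any other black vertex, since each block contributes exactly one black vertex and consecutive separators are far apart), giving size $O(\sqrt{n})$. Each white monochromatic component is contained within a single block (the black separators break the rim into white arcs of length $O(\sqrt{n})$, and white vertices are never adjacent to $c$), so every white component has size $O(\sqrt{n})$. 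Hence $mcc_2(W) = O(\sqrt{n})$, and combined with the lower bound we obtain $mcc_2(W) = \Theta(\sqrt{n})$.

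The main obstacle — really the only place where care is needed — is the counting argument in the lower bound: correctly relating the number of monochromatic white runs on the rim to the size of the black component of $c$, so that a pigeonhole step yields a white component of size $\Omega(n/m)$. Once that relationship ($\#\text{runs} \le m-1$) is pinned down, the quadratic inequality $n \le m^2$ follows immediately. One should also handle the degenerate case where the entire rim is monochromatic (then a single component has $n-1$ vertices, far exceeding $\sqrt{n}$, so the bound holds trivially) and note that the argument is symmetric in which color $c$ receives. The upper-bound construction is routine, so essentially all the content is in this pigeonhole estimate.
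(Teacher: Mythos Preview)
Your argument is correct. The paper does not actually supply its own proof of this lemma --- it is quoted from Berke's thesis --- but the upper-bound coloring you describe is precisely the one the paper sketches in passing inside the proof of Theorem~\ref{thm:outerpath} (color the center and about $\lfloor\sqrt{n}\rfloor$ evenly spaced rim vertices black, leaving white arcs of size roughly $\sqrt{n}$), and your pigeonhole lower bound is the standard one.
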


\noindent We are now ready to prove the last result of this section.

\begin{figure}[ht]
\centering
\includegraphics[scale=0.8,page=1]{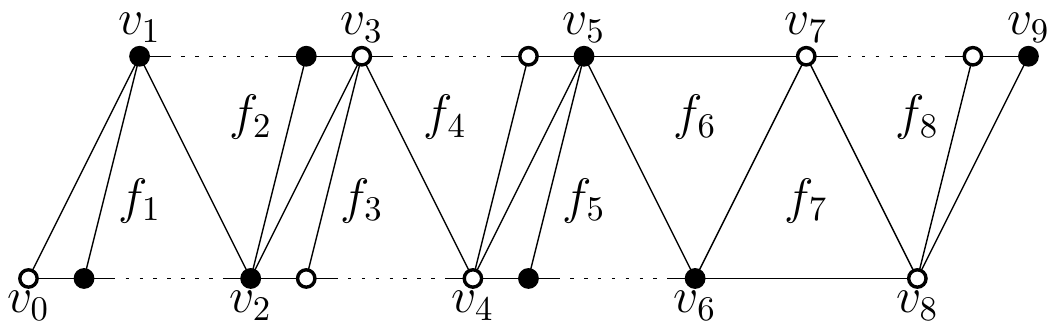}
\caption{An outerpath $\mathcal{O}$ with spine vertices $v_1 \dots v_8$. The vertices of $\mathcal{O}$ are colored according to Theorem~\ref{thm:outerpath}.}
\label{fi:outerpath_coloring}
\end{figure}

\begin{theorem}\label{thm:outerpath}
For an outerpath $\mathcal{O}$ of maximum degree $\Delta$, $mcc_2(\mathcal{O})=\Theta (\sqrt{\Delta})$ holds.
\end{theorem}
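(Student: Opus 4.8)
The plan is to prove the matching bounds $mcc_2(\mathcal{O})=\Omega(\sqrt{\Delta})$ and $mcc_2(\mathcal{O})=O(\sqrt{\Delta})$ separately; for $\Delta\le 3$ the statement is vacuous, so assume $\Delta\ge 4$, which forces $\mathcal{O}$ to contain at least one spine vertex. For the lower bound, I would pick a spine vertex $v_i$ of degree $\Delta$: its fan $f_i$ consists of $v_i$ together with a path $P=v_{i-1}w_1\cdots w_{m_i}v_{i+1}$ on $\Delta$ vertices, each adjacent to $v_i$. In any $2$-coloring, say $v_i$ is black; every black vertex of $P$ lies in the monochromatic component of $v_i$, so if at least $\sqrt{\Delta}$ vertices of $P$ are black, that component already has size $\Omega(\sqrt{\Delta})$, while otherwise the black vertices split $P$ into at most $\sqrt{\Delta}$ white subpaths of total length more than $\Delta-\sqrt{\Delta}$, one of which has length $\Omega(\sqrt{\Delta})$ and is a monochromatic path. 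Either way some monochromatic component of $\mathcal{O}$ has size $\Omega(\sqrt{\Delta})$; this is exactly the lower-bound argument behind Lemma~\ref{lem:wheel_graph}, transplanted to a fan.

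For the upper bound I would construct an explicit $2$-coloring. Recall that $\mathcal{O}$ is the union of its fans $f_1,\dots,f_k$, that consecutive fans $f_i,f_{i+1}$ share precisely the two spine vertices $v_i$ (the center of $f_i$) and $v_{i+1}$ (the center of $f_{i+1}$), that the rim of $f_i$ is a path $v_{i-1}w_1\cdots w_{m_i}v_{i+1}$ with $v_i$ adjacent to all its vertices, that the $w_j$ are private to $f_i$, and that $m_i\le\Delta-2$; call $f_i$ \emph{degenerate} when $m_i=0$, i.e.\ it is just a triangle. I would first color $v_0,v_1,\dots,v_{k+1}$ by the periodic pattern $BBWW\,BBWW\cdots$, that is $v_j$ gets color $\lfloor j/2\rfloor\bmod 2$. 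This particular pattern is chosen deliberately: the only monochromatic edges among spine vertices are $v_0v_1,v_2v_3,\dots$, which form a matching, so the monochromatic components induced on $\{v_0,\dots,v_{k+1}\}$ have size at most $2$; and $v_{i-1}$ and $v_{i+1}$ always receive opposite colors, which is needed because in a degenerate fan $f_i$ the edge $v_{i-1}v_{i+1}$ is present and had better be bichromatic. Then, for each fan $f_i$ with center color $\chi$, I would color the private vertices along the rim ``wheel-style'': break the rim into maximal runs of the opposite color $\overline{\chi}$ of length at most $L:=\lceil\sqrt{\Delta}\rceil$, separated by single $\chi$-vertices; since by the previous point exactly one end of the rim carries color $\overline{\chi}$, I would additionally ensure that the $\overline{\chi}$-run meeting that end also has length $\le L$, which is possible because $m_i\le\Delta-2$ leaves room to place a $\chi$-vertex near the end.

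It then remains to bound the monochromatic components, which I would do by covering each of them with $O(1)$ local \emph{pieces}. For a fan $f_i$, let its \emph{center piece} $A_i$ be $v_i$ together with the $O(\sqrt{\Delta})$ rim vertices of $f_i$ colored $\chi$, and let its \emph{runs} be the maximal $\overline{\chi}$-runs along its rim, each of size $\le L=O(\sqrt{\Delta})$; every vertex lies in some such piece. Using that a spine vertex lies in only three fans ($v_i\in f_{i-1},f_i,f_{i+1}$), and that $v_i$ and $v_{i+1}$ share a color exactly when $i$ is even while $v_i,v_{i+2}$ and $v_{i-1},v_{i+1}$ never do, a case analysis should show that every monochromatic component is either (a) a single run of some $f_i$ disjoint from the spine, of size $\le L$, or (b) the union $A_{2j}\cup A_{2j+1}$ of the two center pieces of a same-color consecutive pair of fans, together with the at most two runs of $f_{2j-1}$ and $f_{2j+2}$ that respectively contain $v_{2j}$ and $v_{2j+1}$. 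In all cases it is a union of $O(1)$ pieces, each of size $O(\sqrt{\Delta})$, so $mcc_2(\mathcal{O})=O(\sqrt{\Delta})$, and together with the lower bound $mcc_2(\mathcal{O})=\Theta(\sqrt{\Delta})$.

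The hard part will be exactly this last step, namely ruling out that a monochromatic component threads through a long chain of fans. The $BBWW$ pattern on the spine is the crucial device: an alternating $BWBW$ coloring breaks on chains of degenerate fans, where it would create a monochromatic spine component of size $\Omega(n)$, and a constant coloring is hopeless. Once the pattern is fixed, the fact that a spine vertex belongs to only three fans and repeats its spine-color only with a single neighbour is precisely what confines each monochromatic component to a bounded number of $O(\sqrt{\Delta})$-sized pieces; the bulk of the remaining work is carrying out the case analysis cleanly, including the boundary fans $f_1,f_k$ and the degree-two vertices $v_0,v_{k+1}$.
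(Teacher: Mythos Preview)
Your proposal is correct and follows essentially the same approach as the paper: the lower bound via the fan/wheel argument is identical to the paper's invocation of Lemma~\ref{lem:wheel_graph}, and for the upper bound the paper likewise colors the spine vertices in a $BBWW$-periodic pattern (shifted by one relative to yours, with $v_0$ given the color opposite to $v_1$) and colors the private rim vertices of each fan wheel-style. The only difference is organizational---the paper verifies the $O(\sqrt{\Delta})$ bound by induction on the number $\ell$ of fans, maintaining the invariant that the appropriate boundary spine vertex is isolated (which is why it additionally forces the first private rim vertex of each fan to take the center's color), whereas you argue directly that every monochromatic component is a union of at most four pieces of size $O(\sqrt{\Delta})$; both analyses are straightforward once the coloring is fixed.
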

\begin{proof}
We color $\mathcal{O}$ with a technique similar to the one used in Lemma~\ref{lem:wheel_graph} for a wheel. Note that the coloring of a wheel with $n$ vertices is obtained by coloring its center and $\lfloor \sqrt{n} \rfloor$ of its neighbors black so that the remaining vertices, which are colored white, are partitioned into blocks of size $\sqrt{n}$ or $\sqrt{n}+1$. For the case of an outerpath, we first color the spine vertices and then we use the idea of the wheel to color its fan vertices. More precisely, we do the following (see Figure~\ref{fi:outerpath_coloring}). Without loss of generality, set the color of $v_1$ black.

\noindent \textbf{Spine vertices:} Starting from $v_1$, consider pairs of consecutive spine vertices; color the vertices of each pair with the same color and alternate the color of two consecutive pairs. So, $v_1$ and $v_2$ are colored black, $v_3$ and $v_4$ are colored white, $v_5$ and $v_6$ are colored black, and so on. For the two vertices $v_0$ and $v_{k+1}$ of degree two do the following: Color $v_0$ white; if $v_k$ and $v_{k-1}$ have the same color, then color $v_{k+1}$ with the other color, otherwise color $v_{k+1}$ with the same color as $v_k$.

\noindent \textbf{Fan vertices:} If $|f_i|=3$ there is no vertex to color. So, assume $|f_i|\geq 4$ and assume without loss of generality that $v_i$ is colored black. Then, either $v_{i-1}$ or $v_{i+1}$ is white. Suppose that $v_{i-1}$ is white. Color black the unique vertex adjacent to both $v_i$ and $v_{i-1}$. The remaining vertices of each fan are colored using the technique of Lemma~\ref{lem:wheel_graph}.

\medskip We now prove that the subgraph $\mathcal{O}_\ell$ induced by the fans $f_1$, $f_2$, $\dots$, $f_\ell$ has $mcc_2(\mathcal{O}_\ell)=O(\sqrt{\Delta})$ with the following additional property: If $\ell$ is odd, then $\mathcal{O}_\ell$ has vertex $v_0$ isolated, while if $\ell$ is even, $\mathcal{O}_\ell$ has both vertices $v_0$ and $v_{\ell+1}$ isolated. The proof is by induction on the number of spine vertices of $\mathcal{O}$. 

If $\ell=1$, $\mathcal{O}_1$ is composed of fan $f_1$ and $|f_1|\geq 4$. By construction, $v_1$ is black, $v_0$ is white, the other vertex adjacent to $v_0$ (call it $w_1$) is black, and $v_2$ is black (see Figure~\ref{fig:base_cases_coloring_1}). The other vertices of $f_1$ are colored according to the technique of Lemma~\ref{lem:wheel_graph}. Observe that $v_0$ is isolated. Consider the subgraph $\mathcal{O}_1'=\mathcal{O}_1\setminus \{v_0\}$. Since $v_0$ is isolated, $mcc_2(\mathcal{O}_1)= mcc_2(\mathcal{O}_1')$. Since $\mathcal{O}_1$ has $\Delta +1$ vertices, $\mathcal{O}_1'$ has $\Delta$ vertices. Now consider the graph $W$ obtained from $\mathcal{O}_1'$ by making vertices $w_1$ and $v_2$ coincident. $W$ is a wheel with $\Delta-1$ vertices (see Figure~\ref{fig:base_cases_coloring_2}). Then $mcc_2(W)=O(\sqrt{\Delta})$. Since $mcc_2(\mathcal{O}_1) = mcc_2(\mathcal{O}_1') \leq mcc_2(W)+1$ (in $\mathcal{O}_1'$ vertices $w_1$ and $v_2$ are both black and distinct), it follows that $mcc_2(\mathcal{O}_1)=O(\sqrt{\Delta})$. 

\begin{figure}[h]
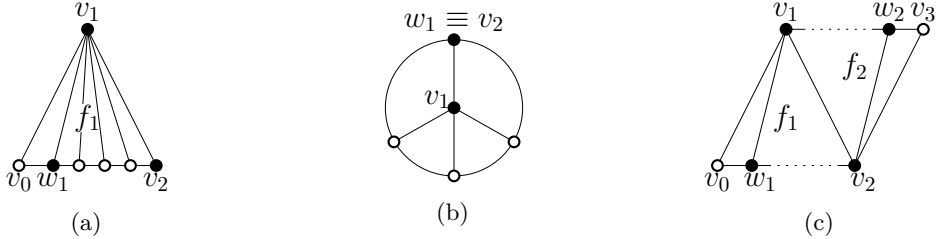

\centering
\begin{subfigure}{0.32\textwidth}
\centering
\includegraphics[scale=0.8,page=2]{figures/outerpaths}
\caption{}
\label{fig:base_cases_coloring_1}
\end{subfigure}
\begin{subfigure}{0.32\textwidth}
\centering
\includegraphics[scale=0.8,page=3]{figures/outerpaths}
\caption{}
\label{fig:base_cases_coloring_2}
\end{subfigure}
\begin{subfigure}{0.32\textwidth}
\centering
\includegraphics[scale=0.8,page=4]{figures/outerpaths}
\caption{}
\label{fig:base_cases_coloring_3}
\end{subfigure}
\caption{(a)-(b)~$\ell=1$; (c)~$\ell=2$.}
\label{fig:base_cases_coloring}
\end{figure}

If $\ell=2$, $\mathcal{O}_2$ is composed of fans $f_1$ and $f_2$ (see Figure~\ref{fig:base_cases_coloring_3}). By construction, $v_1$ and $v_2$ are black, while $v_0$ and $v_3$ are white. Observe that $|f_1| \geq 4$, while $|f_2| \geq 3$. If $f_2$ has size three, $\mathcal{O}_2$ is composed of fan $f_1$ plus vertex $v_3$ that is isolated. Then $mcc_2(\mathcal{O}_2)=O(\sqrt{\Delta})$ by the same argument as above. Otherwise, denote by $w_1$ and $w_2$ the two vertices adjacent to both $v_1$ and $v_0$ and to both $v_2$ and $v_3$, respectively. Vertices $w_1$ and $w_2$ are colored black. Observe that $v_0$ and $v_3$ are isolated and that $v_1$ and $v_2$ are shared by $f_2$ and $f_1$. Denote by $\mathcal{O}_2'$ ($\mathcal{O}_2''$) the subgraph of $\mathcal{O}_2$ induced by $f_1 \setminus \{v_0\}$ ($f_2 \setminus \{v_3\}$). We have that $mcc_2(\mathcal{O}_2'\cup \mathcal{O}_2'')\leq mcc_2(\mathcal{O}_2')+mcc_2(\mathcal{O}_2'')$. Using the same argument as for the case $\ell=1$ for both  $\mathcal{O}_2'$ and  $\mathcal{O}_2''$ and considering the (worst) case where both $v_1$ and $v_2$ have degree $\Delta$, it follows that $mcc_2(\mathcal{O}_2')=mcc_2(\mathcal{O}_2'')$ which is $O(\sqrt{\Delta})$, and then $mcc_2(\mathcal{O}_2'\cup \mathcal{O}_2'')= O(\sqrt{\Delta})$.

For the inductive hypothesis, assume that the statement holds for $\mathcal{O}_\ell$, with $\ell>2$. We prove that it also holds for $\mathcal{O}_{\ell+1}$. If $\ell$ is even, decompose $\mathcal{O}_{\ell+1}$ in the two subgraphs $\mathcal{O}_\ell$ and $f_{\ell+1}$. The induction hypothesis holds for $\mathcal{O}_\ell$, then $\mathcal{O}_\ell$ has both $v_0$ and $v_{\ell+1}$ isolated. Suppose that $v_{\ell+1}$ is white. Since $v_{\ell+1}$ is isolated, $v_\ell$ and $v_{\ell+1}$ have different color. By construction the (unique) vertex of fan $f_{\ell+1}$ adjacent to both $v_\ell$ and $v_{\ell+1}$ has the same color of $v_{\ell+1}$. Observe that the vertices of $f_{\ell+1}$ adjacent to $v_\ell$ have color different from $v_\ell$, and the vertices of $f_\ell$ adjacent to $v_{\ell+1}$ have color different from $v_{\ell+1}$ (recall that they are black). Then $mcc_2(\mathcal{O}_{\ell+1})=max \{mcc_2(\mathcal{O}_\ell),mcc_2(f_{\ell+1})\}$ which is $O(\sqrt{\Delta})$. If $\ell$ is odd, decompose $\mathcal{O}_{\ell+1}$ in the subgraph $\mathcal{O}_{\ell-1}$ and in the subgraph induced by fans $f_\ell$ and $f_{\ell+1}$. Since $\mathcal{O}_{\ell-1}$ is composed of an even number of fans, by induction it has both $v_0$ and $v_\ell$ isolated. Then, as before, $mcc_2(\mathcal{O}_{\ell+1})=max \{mcc_2(\mathcal{O}_{\ell-1}),mcc_2(f_\ell \cup f_{\ell+1})\}=O(\sqrt{\Delta})$.
To complete the proof, we note by Lemma~\ref{lem:wheel_graph} $mcc_2(\mathcal{O})=\Omega(\sqrt{\Delta})$ follows, since each fan of $\mathcal{O}$ can be regarded as a wheel.
\end{proof}

\section{Complete Planar 3-trees}
\label{sec:planar-3-trees}

In this section, we present $2$- and $3$-colorings for complete planar $3$-trees. However, before we proceed with the description of our approach, we will first introduce some important properties of planar $3$-trees, which we use in Section~\ref{subsec:planar-3-trees-algos}.

Let $T_k$ be a complete planar $3$-tree with $k\geq 0$ levels. By definition, $T_k$ is fully triangulated. Let $v_1$, $v_2$ and $v_3$ be the vertices of $T_k$ on its outer face, called \emph{outer vertices} of $T_k$. With a slight abuse of notation, we write $T_k=\{v_1,v_2,v_3\}$. As mentioned above, $T_k$, with $k>0$, is constructed from $T_{k-1}$ as follows: In every bounded face $f$ of $T_{k-1}$ we add a vertex $v$ in its interior and connect it to the three vertices of $f$, splitting $f$ into three ``smaller'' triangular faces. We say that $v$ is at \emph{level} $k$ and $v$ is the \emph{central vertex} of $f$. A vertex of maximum level is called a \emph{leaf-vertex} and it is adjacent to three \emph{empty-triangles}, that is, faces of $T_k$. By definition each triangle of $T_k$ contains in its interior another complete planar $3$-tree with $k'<k$ levels; if $k'>0$ we refer to such a triangle as \emph{non-empty}. Note that empty-triangles induce degenerate complete planar $3$-trees of $k'=0$ levels.

Outer vertices $v_i$ and $v_j$ (with $1\leq i,j\leq3$)  have $k$ common neighbors that form a so-called \emph{central path}  $P_{i,j}=w_1,w_2,\dots,w_k$, where $w_1$ is the central vertex of $T_1$, namely, $P_{i,j}$ is the path from the central vertex $w_1$ to the unique leaf-vertex $w_k$ that is a common neighbor of $v_i$ and $v_j$. For $1 \leq \ell < k$, vertices $v_i$, $w_\ell$ and $w_{\ell+1}$ of $P_{i,j}$ form a triangle, called \emph{side-triangle}. Note that side-triangles contain in their interior complete planar $3$-trees with $k-\ell-1$ levels; side-triangles are in principle non-empty, except for the case where $\ell=k-1$. Vertices $v_i$, $v_j$ and $w_k$ form an empty-triangle, called \emph{leaf-triangle}.

\begin{property}\label{prp:order}
$T_k$ has $n=3+\sum_{i=1}^{k}{3^{i-1}}=(3^k+5)/2$ vertices and $3^k+1$ triangular faces.
\end{property}

\begin{property}\label{prp:neighbors}
The number of neighbors for outer vertex $v_j$ is equal to $2+\sum_{i=1}^{k}{2^{i-1}}=2^{k}+1$, and the number of internal neighbors of all three outer vertices equals $3(2^k-1)-2-3(k-1)=3\cdot2^{k}-3k-2$.
\end{property}

\noindent We are now ready to present the main results of this section. As a last tool, we further need the following result, which is a direct consequence of Lemma~\ref{lem:wheel_graph}.

\begin{corollary}
\label{lem:wheel}
A (planar) graph $G$ that has as a subgraph a wheel on $\nu$ vertices has $mcc_2(G) = \Omega(\sqrt{\nu})$. 
\end{corollary}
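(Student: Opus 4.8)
The statement to prove is Corollary~\ref{lem:wheel}: a (planar) graph $G$ that contains a wheel on $\nu$ vertices as a subgraph has $mcc_2(G) = \Omega(\sqrt{\nu})$.

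This is a straightforward consequence of Lemma~\ref{lem:wheel_graph}. The plan is to argue that any $2$-coloring of $G$ restricts to a $2$-coloring of the wheel subgraph $W$, and that monochromatic components can only grow (never shrink) when we pass from the subgraph $W$ to the supergraph $G$, so the largest monochromatic component in $G$ is at least as large as the largest one forced in $W$.

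First I would let $W$ be a wheel on $\nu$ vertices that is a subgraph of $G$. Given any $2$-coloring $c$ of $G$, restrict it to the vertex set of $W$; this is a valid $2$-coloring of $W$. By Lemma~\ref{lem:wheel_graph}, $mcc_2(W) = \Theta(\sqrt{\nu})$, so in particular every $2$-coloring of $W$ — and hence the restriction of $c$ — has a monochromatic component of size $\Omega(\sqrt{\nu})$. Call this component $C$; it is a connected monochromatic subgraph of $W$, hence also a connected monochromatic subgraph of $G$ on the same vertex set. Therefore in $G$ the vertices of $C$ all lie in a single monochromatic component (possibly a strictly larger one, since $G$ may have extra edges connecting $C$ to more same-colored vertices). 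Thus $G$ has a monochromatic component of size at least $|C| = \Omega(\sqrt{\nu})$. Since $c$ was an arbitrary $2$-coloring of $G$, it follows that $mcc_2(G) = \Omega(\sqrt{\nu})$.

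There is essentially no obstacle here; the only thing to be careful about is the elementary monotonicity observation that adding edges to a graph cannot decrease the size of monochromatic components under a fixed vertex coloring, which is immediate since connectivity is preserved under taking supergraphs on the same vertex set. The parenthetical ``planar'' in the statement plays no role in the argument — the bound holds for any graph containing a large wheel — and I would note this in passing. No calculation is required beyond invoking the $\Theta(\sqrt{\nu})$ bound of Lemma~\ref{lem:wheel_graph}.
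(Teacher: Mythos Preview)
Your argument is correct and is exactly what the paper intends: it states the corollary as a ``direct consequence of Lemma~\ref{lem:wheel_graph}'' without spelling out the details, and the monotonicity-under-supergraphs observation you give is precisely the one-line justification. Your remark that planarity plays no role is also accurate.
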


Note that Lemma~\ref{lem:wheel_graph} holds only for two colors. For three colors, it is not difficult to see that $mcc_3(W)=1$ or $2$. One can argue similarly for a \emph{double wheel}. A double wheel on $n$ vertices consists of a wheel on $n-1$ vertices, and a second central vertex that is also adjacent to every vertex of the rim. It is not hard to see that $mcc_3(D)=1$ or $2$ holds for a double wheel $D$ as well. Note that in such a coloring the two central vertices always have the same color. However, for the case where the two centers must have different colors, one can see, following an approach similar to the one of the proof of Lemma~\ref{lem:wheel_graph}, that there exists a monochromatic component of size $\Omega(\sqrt{n})$.

\subsection{Colorings of Complete Planar 3-trees}
\label{subsec:planar-3-trees-algos}

We are now ready to present our main results for complete planar $3$-trees. In Theorem~\ref{thm:3trees_mcc3} we focus on $3$-colorings, while Theorem~\ref{thm:3trees_mcc2} regards $2$-colorings. 

\begin{theorem}\label{thm:3trees_mcc3}
For a complete planar $3$-tree $T$ on $n$ vertices, $mcc_3(T) = \Theta(\log n)$ holds.
\end{theorem}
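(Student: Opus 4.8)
The plan is to establish the two bounds separately. For the upper bound $mcc_3(T) = O(\log n)$, I would describe an explicit recursive $3$-coloring that exploits the layered structure of $T_k$. The idea is to assign the three outer vertices $v_1, v_2, v_3$ the three distinct colors, so that each of them forms a singleton monochromatic component ``at its corner''. Then, recursing into $T_{k-1}$ (inside the outer triangle), the central vertex $w_1$ of $T_1$ is adjacent to all three of $v_1, v_2, v_3$, so it can safely receive \emph{any} of the three colors without merging with two of the corners — I would reuse one color, say that of $v_1$, but crucially $w_1$ is not adjacent to $v_1$'s component beyond the edge $v_1 w_1$, so the component grows by one. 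Iterating the central-path construction along each $P_{i,j}$, the monochromatic component containing $v_i$ gains at most one vertex per level, giving components of size $O(k) = O(\log n)$ by Property~\ref{prp:order}. The key point to verify carefully is that when we recurse inside each of the three sub-triangles created by $w_1$, the ``boundary colors'' we are forced to respect are always three distinct colors on a triangle (the invariant), so the recursion applies uniformly; since the recursion depth is $k$ and at each level a vertex merges with at most one previously-colored neighbor's component, every monochromatic component is a path-like structure of length $O(k)$.

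For the lower bound $mcc_3(T) = \Omega(\log n)$, I would exploit the fact that deep inside $T_k$ there is a long nested sequence of triangles $\Delta_0 \supset \Delta_1 \supset \cdots \supset \Delta_{k}$, where $\Delta_{\ell+1}$ is one of the three sub-triangles obtained by inserting the level-$(\ell+1)$ central vertex into $\Delta_\ell$. Fix any $3$-coloring. Each triangle $\Delta_\ell$ has its three vertices colored; because $\Delta_\ell$ is a triangle its vertices use at most $3$ colors, and I want to track how the color multiset on the boundary of $\Delta_\ell$ evolves as $\ell$ increases. When all three boundary colors of $\Delta_\ell$ are distinct, the newly inserted central vertex $u$ (adjacent to all three) must repeat one of them, creating a monochromatic edge; then at least one sub-triangle $\Delta_{\ell+1}$ has a repeated color on its boundary, and I can follow that sub-triangle. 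When two of the three boundary colors of $\Delta_\ell$ already coincide, those two like-colored vertices are adjacent (they share the triangle's edge), so they already lie in a common monochromatic component — and I argue that along the nested chain this component keeps absorbing new vertices at a positive rate, so after $\Omega(k)$ levels it has $\Omega(k) = \Omega(\log n)$ vertices. Making this ``positive rate'' claim precise is the crux: I would set up a potential/case analysis on the color pattern of the three boundary vertices of $\Delta_\ell$ (patterns $abc$, $aab$, $aaa$) and show that over any constant number of consecutive levels, either a monochromatic component grows by at least one vertex, or we can redirect into a sub-triangle that still contains $\Omega$ of the remaining levels.

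The main obstacle I anticipate is the lower bound bookkeeping: a single nested triangle chain only directly forces a monochromatic \emph{path}, and I must ensure the counting does not lose a constant factor that would collapse $\Omega(\log n)$ to $\Omega(1)$. Concretely, when the central vertex of $\Delta_\ell$ creates a monochromatic edge with, say, the $v_i$-corner, continuing into the sub-triangle that contains that edge guarantees the component grows; but I must check that I can \emph{always} choose a sub-triangle that simultaneously (a) carries the growing monochromatic component on its boundary and (b) still has depth proportional to $k - \ell$ of complete planar $3$-tree inside it. By the definition of complete planar $3$-trees each of the three sub-triangles of $\Delta_\ell$ contains a complete planar $3$-tree with exactly $k-\ell-1$ levels, so depth is never the issue — the only real work is the color-pattern case analysis showing the monochromatic component cannot ``stall'' for more than a bounded number of levels. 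I would organize that as a short lemma: in any $3$-coloring, along the nested chain there is a monochromatic component meeting $\Delta_\ell$ for all $\ell$ in an arithmetic-density-one subset of $\{0,\dots,k\}$, hence of size $\Omega(k) = \Omega(\log n)$, matching the upper bound.
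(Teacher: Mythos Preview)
Your upper-bound scheme has a concrete flaw: the invariant you state --- that the outer triangle of every recursive subproblem is tricolored --- is not preserved. If $v_1,v_2,v_3$ receive three distinct colors and the central vertex $w_1$ reuses the color of $v_1$, then of the three sub-triangles only $\{v_2,v_3,w_1\}$ is tricolored; the other two, $\{v_1,v_2,w_1\}$ and $\{v_1,v_3,w_1\}$, are bicolored. So the recursion does not ``apply uniformly'' as you claim, and nothing in your sketch controls what happens inside those bicolored sub-triangles. (Indeed, a tricolored invariant \emph{cannot} be maintained at every face: that would be a proper $3$-coloring, impossible since $T_k$ contains $K_4$.) The paper takes the opposite tack: it keeps the invariant that the outer triangle is \emph{bicolored} (colors $c_1,c_2$), and then spends the third color $c_3$ on the central vertex together with all three central paths $P_{1,2},P_{1,3},P_{2,3}$ simultaneously. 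This creates a single $c_3$-component of size $3k-2$ and leaves only side-triangles to recurse into, each of which has two $c_3$-corners and one corner in $\{c_1,c_2\}$ --- bicolored again, so the invariant persists and the $c_3$-components built at different recursion depths are insulated from one another by their bicolored boundaries.

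Your lower-bound plan, by contrast, is sound and in fact more self-contained than the paper's. The paper proves the nested-chain statement only under the extra hypothesis that the outer triangle is tricolored (then one of three monochromatic paths, one per color, reaches length at least $k/3$ by pigeonhole), and then invokes the wheel around $v_1$ restricted to the first $2\log\log_3 n$ levels together with Lemma~\ref{lem:wheel_graph} to either force a large component directly or locate a tricolored triangle with $k-O(\log\log n)$ levels still inside it. Your potential argument on the boundary pattern ($abc$, $aab$, $aaa$) avoids the wheel lemma altogether: at each step the center either matches a boundary color (growing one of at most three tracked corner-components by one) or introduces a new color (raising the number of boundary colors, which can happen at most twice before the pattern becomes $abc$ and then stays $abc$ forever, since from a tricolored triangle one can always recurse into the unique tricolored sub-triangle). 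Over $k$ steps this yields at least $k-2$ growth events shared among three components, hence one of size $\Omega(k)=\Omega(\log n)$. That is a legitimate and slightly cleaner alternative route to the lower bound.
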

\begin{proof}
To prove the upper bound, assume that $T=T_k=\{v_1,v_2,v_3\}$ has $k$ levels. We describe a recursive coloring algorithm (see Figure~\ref{fi:completePlanar3Tree} for an example), which maintains the following simple invariant: The outer vertices of the graph to be colored are bicolored. To this end, we initially color vertices $v_1$ and $v_2$ with the color $c_1$, and vertex $v_3$ with the color $c_2$. Clearly, this conforms with our invariant. Then we color the central vertex $v$ and the three central paths $P_{i,j}$ for $1\leq i<j\leq 3$, with $v$ as their common end-vertex, with $c_3$. The induced subgraph with vertices $v_1$, $v_2$, $v_3$ and the vertices of the three central paths contains a monochromatic component of size $3k-2$, and splits $T$ into inner triangles. Two of the three leaf-triangles have all three colors on their vertices but they contain no other vertex in their interior. Side-triangles have their vertices colored with two colors (since two vertices belong to a central path and are colored with $c_3$). The algorithm recursively computes the colors of the vertices in the interior of the non-empty side-triangles. Note that every side-triangle is a complete planar $3$-tree with $\ell<k$ levels and its outer vertices have two colors, which conforms with our invariant. Hence, at each recursive step of our algorithm, a monochromatic component of size $3\ell-2$ is created. By construction, each monochromatic component inside the side-triangle is either not adjacent to the three outer vertices, or it has different color. This property ensures that no two monochromatic components of the same color, created at two different steps, are adjacent. Hence, the size of the largest monochromatic component is $3k-2=O(\log n)$. In order to prove the lower bound, we need the following claim.

\begin{figure}[tb]
\centering
\includegraphics[width=.42\columnwidth,page=1]{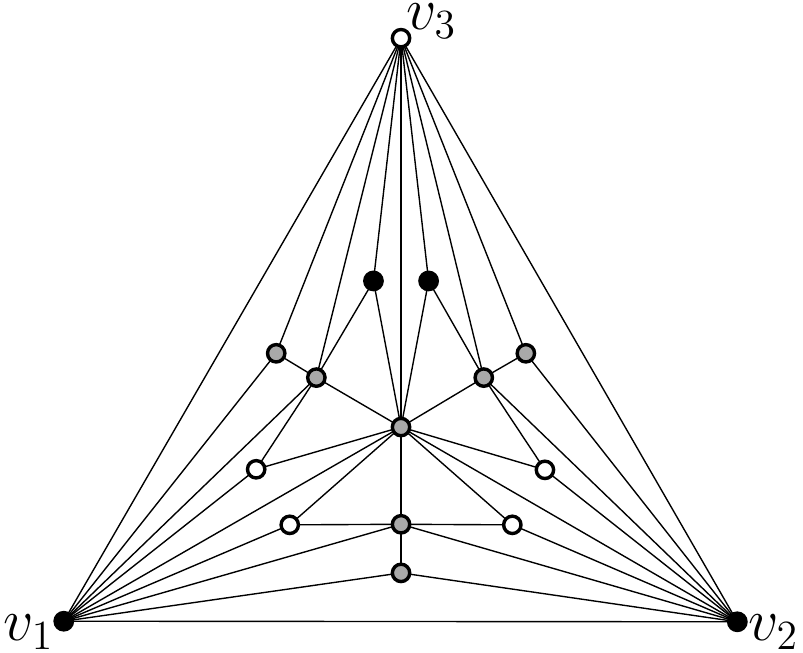}
\caption{A complete planar $3$-tree with $3$ levels. Vertices are colored according to Theorem~\ref{thm:3trees_mcc3}.}
\label{fi:completePlanar3Tree}
\end{figure}

\begin{claim}
Let $T=\{v_1,v_2,v_3\}$ be a complete $3$-colored planar $3$-tree with $k$ levels, where $v_1$, $v_2$ and $v_3$ have all different colors. Then, there exists a monochromatic path of length at least $k/3$, with one endpoint on the outer face of $T$.
\end{claim}
\begin{proof}
For $k=0$ the claim trivially holds, so we assume that $k>0$. Denote by $P_i$ a degenerate path consisting only of vertex $v_i$, for $i=1,2,3$. We shall prove that at least one of these three paths can be extended to a monochromatic path of length at least $k/3$. 
Let $w$ be the central vertex of $T$. Since $v_1$, $v_2$ and $v_3$ have different colors, $w$ must have the same color with exactly one of the three outer vertices. Assume without loss of generality that $w$ has the same color as $v_1$. This implies that we can extend monochromatic path $P_1$ by adding vertex $w$, that is $P_1=v_1,w$. Also, vertex $w$ forms a $3$-colored triangle $T_1$ together with outer vertices $v_2$ and $v_3$ of $T$. Now $T_1=\{v_2,v_3,w\}$ is a complete planar $3$-tree with $k-1$ levels and vertices of its outer face have different colors. Furthermore, paths $P_1$, $P_2$ and $P_3$ contain four vertices in total.
Following a similar approach, from the central vertex of $T_1$, one can have a complete planar $3$-tree $T_2$ with $k-2$ levels, whose outer vertices have different colors. Furthermore, the central vertex of $T_1$ extends one of the three monochromatic paths by one vertex. Repeating the above procedure $k$ times, we can create a chain of nested complete planar $3$-trees $T$, $T_1$, \dots, $T_k$, where their outer vertices have different colors. At each step, the central vertex is added to one of the three monochromatic paths. Paths $P_1$, $P_2$ and $P_3$ have a total of $3+k$ vertices. Hence, at least one of them has length at least $k/3$, and the claim holds.
\end{proof}

From the above claim, in order to prove the lower bound, it suffices to find a subgraph $T'$  of $T$ that is a complete planar $3$-tree with $\Omega(\log n)$ levels, whose outer vertices have all three colors. Consider outer vertex $v_1$ of $T$ and let $N_\ell(v_1)$ be its neighboring vertices at level at most $\ell$, where $\ell=2\log\log _3n$. Then the vertices $v_1$, $v_2$, $v_3$ and $N_\ell(v_1)$ form a wheel $W$ with the center vertex $v_1$. By Property~\ref{prp:neighbors}, $|N_\ell(v_1)|=2^\ell+1$, and therefore wheel $W$ has $n_W=1+(2^{\ell}+1)$ vertices. Let $c_1$ be the color of vertex $v_1$. We consider two cases depending on whether there exist two consecutive vertices along the rim of $W$ with colors $c_2$ and $c_3$. Suppose first that this is not the case, and vertices of $W$ with color $c_2$ are not adjacent to vertices with color $c_3$. This implies that if one identifies colors $c_2$ and $c_3$ to one color, the size of all monochromatic components in the derived $2$-coloring of $W$ remains the same. By Lemma~\ref{lem:wheel_graph}, we have that $W$ contains a monochromatic component of size $\Omega(\sqrt{n_W})$. For $\ell=2\log\log _3n$, it follows that $W$, and therefore $T$, contains a monochromatic component of size $\Omega(\sqrt{n_W})=\Omega(\log n)$. In the second case, there exist two consecutive vertices on the cycle of $W$ with colors $c_2$ and $c_3$. These two vertices, together with $v_1$, form a non-empty triangle $T'$ that is a complete planar $3$-tree with at least $k'\geq k-\ell$ levels. Since the three outer vertices of $T'$ are of different colors, the previous claim assures that there exists a monochromatic path of length at least $k'/3$. For $\ell=2\log\log _3n$, it follows that there exists a monochromatic path of length $\Omega(\log n)$. This completes our proof.
\end{proof}

\noindent We now adjust the technique used in the proof of Theorem~\ref{thm:3trees_mcc3} to obtain the following result.

\begin{theorem}\label{thm:3trees_mcc2}
For a complete planar $3$-tree $T$ on $n$ vertices, the following holds:

$$mcc_2(T) = O(n^{1/{\log 6}})=O(n^{0.387}).$$
\end{theorem}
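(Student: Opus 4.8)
The plan is to mimic the $3$-coloring scheme of Theorem~\ref{thm:3trees_mcc3}, but since only two colors are available, the central path through $v$ can no longer be given a ``fresh'' third color; instead it must reuse one of the two colors used on the outer vertices, which is why the monochromatic components grow polynomially rather than logarithmically. The invariant I would maintain during the recursion is that the triangle currently being colored is a complete planar $3$-tree whose three outer vertices are \emph{not} all the same color (i.e. at least one is black and at least one is white). Given such a triangle $T_\ell$ with $\ell$ levels, say $v_1,v_2$ black and $v_3$ white, I color the central vertex $v$ and the two central paths $P_{1,3}$ and $P_{2,3}$ with black (extending the black component containing $v_1,v_2$), and the central path $P_{1,2}$ also black; then the three central paths plus $v$ are monochromatic black. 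This splits $T_\ell$ into side-triangles and leaf-triangles. The side-triangles along $P_{1,2}$ have two black spine vertices and one black outer vertex — all black outer vertices — so I must be careful there; I would instead route the ``third color'' differently, e.g. color $P_{1,2}$ white, so that $v$ together with $P_{1,3},P_{2,3}$ is one black component of size $2k-1$, while along $P_{1,2}$ the vertices are white and every side-triangle on that side gets an all-white-minus-one boundary, again violating the invariant on one side. The correct bookkeeping is: with two outer colors available, exactly one of the three ``sides'' can be forced into a monochromatic boundary, and on that side we recurse into a subproblem of a genuinely smaller type (a central-path–bounded region, analyzed like a wheel via Lemma~\ref{lem:wheel_graph}), while on the other two sides the invariant is restored and we recurse.

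The key steps, in order, are: (1) fix the coloring rule described above and verify that after one step the new monochromatic component created has size $O(\ell)$ (it consists of $v$ plus two central paths, so at most $3\ell-2$ vertices), and that no component of a given color created at different recursive depths becomes adjacent — this is the same separation argument as in Theorem~\ref{thm:3trees_mcc3}, using that a component inside a side-triangle is either non-adjacent to the outer vertices or of the opposite color; (2) set up the recursion for the largest component size $M(k)$ as a function of the number of levels $k$. Each level-$k$ triangle spawns side-triangles with $\ell = k-1, k-2, \dots$ levels along each central path, and because two of the three paths receive the two outer colors, the recursion branches — crucially, the total work done on subtriangles with the \emph{same} new monochromatic color must be summed, since those could merge. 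This is what produces the exponent $1/\log 6$: at each level there are effectively $6$ ``copies'' of the subproblem of size reduced by a constant factor (three central paths, two color classes each, minus overlaps), giving $M(k) \approx 6^{\,k/2}$ up to polynomial corrections, and since $n = \Theta(3^k)$ by Property~\ref{prp:order} we get $k = \Theta(\log n)$ and $M = n^{\Theta(\log 6 / (2\log 3))}$; one then checks $\log 6/(2\log 3) = 1/\log 6$ is \emph{not} quite right and the actual clean statement is the exponent $1/\log 6 \approx 0.387$ as claimed, which I would obtain by solving $M(k) = c\cdot k + \sum_{\ell} (\text{branching}) M(\ell)$ carefully. (3) Finally, bound the contribution of the ``wheel-like'' regions hanging off the monochromatic central paths: each such region is bounded by a monochromatic path and two outer vertices, so identifying the path to a single vertex turns it into (a subgraph of) a wheel, and Lemma~\ref{lem:wheel_graph} gives that it can be $2$-colored with components of size $O(\sqrt{|\text{region}|}) = O(\sqrt{n})$, which is dominated by $n^{0.387}$... wait — $\sqrt n$ dominates $n^{0.387}$, so in fact the wheel pieces must themselves be recursed into rather than colored directly, and the analysis must ensure every wheel encountered has only $O(\mathrm{polylog})$ or $O(n^{2\cdot 0.387})$ vertices; handling this interaction is where the exponent really comes from.

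The main obstacle I anticipate is exactly this last interaction: getting the recursion to simultaneously (a) keep every newly-created monochromatic central-path component of size comparable to $M(\ell)$ for the current number of levels $\ell$, (b) ensure the ``flat'' wheel-bounded regions that one is tempted to color directly via Lemma~\ref{lem:wheel_graph} are small enough that $\sqrt{\cdot}$ of their size does not exceed the target $n^{0.387}$, and (c) verify non-adjacency of same-colored components across all recursive depths, including between a central-path component and the components produced inside the wheel regions. Concretely, I expect one must \emph{not} stop the recursion at the wheels but continue splitting them by their own central structure, so that the real recurrence counts, at each of the $k$ levels, a branching factor whose logarithm is $\log 6$ rather than $\log 3$; pinning down that the branching factor is precisely $6$ (and not, say, $4$ or $9$) by carefully accounting for which sub-triangles inherit which of the two colors — and showing the two color classes evolve symmetrically so the bound is clean — is the delicate combinatorial heart of the argument. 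Everything else (the per-step size bound, the translation $k \leftrightarrow \log n$ via Property~\ref{prp:order}, and the separation argument) is routine adaptation of the proof of Theorem~\ref{thm:3trees_mcc3}.
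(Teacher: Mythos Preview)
Your proposal has a genuine gap: the central-path recursion you are trying to adapt from Theorem~\ref{thm:3trees_mcc3} does not lead to the exponent $1/\log 6$, and the paper does \emph{not} obtain the bound this way. You yourself notice the two fatal symptoms. First, with only two colors, at least one family of side-triangles along a central path inherits a monochromatic outer boundary, so your invariant cannot be maintained; you try to hand those regions off to Lemma~\ref{lem:wheel_graph}, but as you observe, $\sqrt{n}$ is larger than $n^{0.387}$, so that patch fails. Second, your attempt to locate a ``branching factor $6$'' in the recursion is speculative: the calculation $M(k)\approx 6^{k/2}$ gives the wrong exponent (about $0.815$), and the identity $\log 6/(2\log 3)=1/\log 6$ you hoped for is simply false. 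Nothing in the central-path decomposition produces a clean factor of $6$.

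The paper's argument is structurally different and much simpler. It does not recurse along central paths at all. Instead it fixes a parameter $\ell$, colors the three outer vertices together with \emph{all} vertices of the first $\ell$ levels with color $c_1$ (one component of size $n_1=(3^\ell+5)/2$ by Property~\ref{prp:order}), which shatters the remaining vertices into $3^\ell$ complete planar $3$-trees each with $k-\ell$ levels. Inside each such piece, it colors with $c_2$ the set of \emph{all interior neighbors of the three outer vertices}; by Property~\ref{prp:neighbors} this is a single component of size $n_2=3\cdot 2^{k-\ell}-3(k-\ell)-2$. Crucially, every still-uncolored vertex now lies in a smaller complete planar $3$-tree whose outer triangle is monochromatic in $c_2$, so one can repeat the ``color all neighbors of the boundary with the opposite color'' step, producing only strictly smaller components thereafter. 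Thus $mcc_2(T)\le\max\{n_1,n_2\}$, and balancing $3^\ell\asymp 2^{k-\ell}$ gives $\ell\approx k/\log_2 6$ and component size $O(3^{k/\log_2 6})=O(n^{1/\log_2 6})$. The number $6$ arises here not as a branching factor but from equating a base-$3$ growth against a base-$2$ growth, i.e.\ from $\log 3+\log 2=\log 6$; your proposal never touches this balancing idea, which is the heart of the proof.
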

\begin{proof}
The general idea is to color the first $\ell$ levels with color $c_1$, their neighbors with color $c_2$ and to use recursion. Assume that $T=T_k$ has $k$ levels. We color the three outer vertices plus the vertices of the first $\ell$ levels with color~$c_1$. This yields a monochromatic component of size $n_1=(3^\ell+5)/2$ (by Property~\ref{prp:order}). This coloring implies a ``subdivision'' of the remaining vertices into $3^\ell$ components, that are planar $3$-trees with $k-\ell$ levels. In each of the smaller $3$-trees,  say $T_{k-\ell}$, we use color $c_2$ for the neighbors of the outer vertices (outer vertices have already color $c_1$). By Property~\ref{prp:neighbors} we create components of size $n_2=3\cdot2^{k-\ell}-3(k-\ell)-2$. 

After this step, uncolored vertices form smaller components in the interior of complete planar $3$-trees that have fewer than $k-\ell$ levels and their outer vertices have color $c_2$. Note that there exist non-monochromatic triangles, however we claim that they are empty. Assume that there is an uncolored vertex $u$ inside a non-monochromatic triangle $T_k$. At least one of the vertices of $T_k$ has color $c_1$ and it is therefore an outer vertex of a tree $T_{k-\ell}$. Hence $u$ is adjacent to this outer vertex, and therefore has already color $c_2$, a contradiction.

The remaining components are complete planar $3$-trees with $k'<k-\ell$ levels each. The outer vertices of $T_{k'}$ have color $c_2$ and we proceed by coloring the neighbors of its outer vertices with color $c_1$. Thus we create one monochromatic component with fewer than $n_2$ vertices. As before, uncolored vertices form smaller complete planar $3$-trees with even fewer levels than $k'$. Furthermore the property of empty non-monochromatic triangles still holds, and we can recursively color the remaining uncolored components, by alternatively using colors $c_1$ and $c_2$. At every recursive step, we create monochromatic components with fewer than $n_2$ vertices. Hence, $mcc_2(T)\leq \max\left\{n_1, n_2\right\}$, which is minimized for $n_1=n_2$, that is, $\frac{(3^\ell+5)}{2}=3\cdot2^{k-\ell}-3(k-\ell)-2$. So, we have:

\[\ell=1+\frac{k}{\log6}+\frac{\log(1-2^{\ell-k-1}(3+2(k-\ell)))}{\log6}=\frac{k}{\log6}+C, \]

\noindent where $C=1+o(1)$. Then: 
$$mcc_2(T) \leq \frac{(3^\ell+5)}{2}=\frac{(3^C\cdot3^{\frac{k}{\log6}}+5)}{2}$$ 

\noindent By Property~\ref{prp:order}, it follows that $3^k=2n-5$ and thus $mcc_2(T)= O(n^{\frac{1}{\log6}})$.
\end{proof}

\begin{theorem}
For a complete planar $3$-tree $T$ on $n$ vertices, the following holds:

$$mcc_2(T) = \Omega(n^{\frac{1}{2} \log_3{2}})=\Omega(n^{0.315}).$$
\end{theorem}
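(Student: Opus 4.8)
The plan is to exhibit a large wheel as a subgraph of $T$ and then apply Corollary~\ref{lem:wheel}. Write $T=T_k=\{v_1,v_2,v_3\}$ for the complete planar $3$-tree with $k$ levels. Since $T_k$ is built from the triangle $T_0$ by repeatedly inserting into each bounded face a new vertex adjacent to exactly the three vertices of that face, an easy induction shows that throughout the construction the link of every vertex already present stays a simple cycle: when a vertex is stacked into a face $xyz$, it is inserted between $y$ and $z$ in the rotation around $x$ and is joined by an edge to both $y$ and $z$. In particular, the neighbors of $v_1$, listed in the rotation around $v_1$, form a path that starts at $v_2$, ends at $v_3$, has consecutive vertices joined by edges of $T$, and is closed into a cycle $C$ by the outer edge $v_2v_3$. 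Together with $v_1$ and the edges from $v_1$ to the vertices of $C$, this cycle forms a wheel $W$ which is a subgraph of $T$ with center $v_1$.

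By Property~\ref{prp:neighbors}, the outer vertex $v_1$ has exactly $2^{k}+1$ neighbors in $T_k$, so $W$ is a wheel on $\nu=2^{k}+2$ vertices. Any $2$-coloring of $T$ restricts to a $2$-coloring of $W$, and every monochromatic component of $W$ is contained in a monochromatic component of $T$; hence $mcc_2(T)\ge mcc_2(W)$, and Corollary~\ref{lem:wheel} yields $mcc_2(T)=\Omega(\sqrt{\nu})=\Omega(2^{k/2})$.

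It remains to re-express the bound in terms of $n$. By Property~\ref{prp:order}, $n=(3^{k}+5)/2$, so $3^{k}=2n-5$ and $k=\log_3(2n-5)$. Using the identity $2^{\log_3 m}=m^{\log_3 2}$, we get $2^{k/2}=(2n-5)^{\frac12\log_3 2}=\Theta\bigl(n^{\frac12\log_3 2}\bigr)$, and therefore $mcc_2(T)=\Omega\bigl(n^{\frac12\log_3 2}\bigr)=\Omega(n^{0.315})$.

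There is no real obstacle here: the argument is essentially a one-shot reduction to the wheel lower bound of Lemma~\ref{lem:wheel_graph}. The only step needing (routine) care is checking that the neighborhood of $v_1$ is the rim of an honest wheel subgraph of $T$, which is immediate from the stacking definition of complete planar $3$-trees; the remainder is arithmetic with the closed forms of Properties~\ref{prp:order} and~\ref{prp:neighbors}. One could instead restrict to the first $\ell\le k$ levels around $v_1$ and recurse, mirroring the proofs of Theorems~\ref{thm:3trees_mcc3} and~\ref{thm:3trees_mcc2}, but already taking $\ell=k$ gives the stated exponent; whether a smarter argument closes the gap to the $O(n^{0.387})$ upper bound of Theorem~\ref{thm:3trees_mcc2} remains open.
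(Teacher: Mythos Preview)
Your proof is correct and follows essentially the same approach as the paper: exhibit the wheel around an outer vertex, invoke Property~\ref{prp:neighbors} to see it has $\Omega(2^k)=\Omega(n^{\log_3 2})$ vertices, and apply Corollary~\ref{lem:wheel}. Your write-up is in fact more careful than the paper's, as you justify explicitly that the closed neighborhood of $v_1$ really is a wheel subgraph and carry out the change of variables from $k$ to $n$ in full.
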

\begin{proof}
The neighbors of each of the outer vertices form a wheel, each of which has size $2+1+2+\ldots+ 2^{\log_3{n}} = 2^{\log_3{n}+1} + 1 = \Omega(n^{\log_3{2}})$. Then, the lemma follows from Corollary~\ref{lem:wheel}.
\end{proof}

\section{Conclusions and Open Problems}
\label{sec:conclusions}

In this paper, we studied $2$- and $3$-colorings and presented improved bounds on the size of the monochromatic components for internally triangulated outerplanar graphs and complete planar 3-trees. Several questions remain open: 

\begin{enumerate}[(i)] 
\item The study of other classes of graphs with bounded tree-width, which would give rise to improved bounds w.r.t.\ the ones of Linial et al.~\cite{DBLP:journals/endm/LinialMST07}, is of interest. 
\item Other classes of graphs that allow for an efficient computation of the largest monochromatic component are of importance. 
\item We believe that the class of $1$-planar graphs would be interesting in this context, as it is neither minor closed nor of bounded tree-width. No results are known for the size of the largest monochromatic component (for $2$- and $3$-colorings) of these graphs.
\end{enumerate}

\subsection*{Acknowledgments} We thank Tamara Mchedlidze for useful discussions.

\bibliography{references}
\bibliographystyle{abbrv}

\end{document}